\newtheorem{theorem}{Theorem}
\newtheorem{proposition}{Proposition}
\newtheorem{definition}{Definition}
\newcommand{\bbZ}{{\mathbb Z}}
\newcommand{\bbC}{{\mathbb C}}
\newcommand{\bbP}{{\mathbb P}}
\newcommand{\cC}{{\mathcal C}}
\newcommand{\cE}{{\mathcal E}}
\newcommand{\cQ}{{\mathcal Q}}
\begin{document}

\title{Geometry of autonomous discrete Painlev\'e equations related to the Weyl group $W(E_8^{(1)})$}
\author{Jaume Alonso \and Yuri B.\ Suris}
\date{}
\maketitle

\vspace{-1cm}

\begin{center}
Institut f\"ur Mathematik, MA 7-1\\
Technische Universit\"at Berlin \\
Str. des 17. Juni 136,
10623 Berlin, Germany\\
E-mail: {\tt  alonso@math.tu-berlin.de, suris@math.tu-berlin.de}
\end{center}

\begin{abstract}

Discrete Painlev\'e equations are integrable two-dimensional birational maps associated to a family of generalized Halphen surfaces. The latter can be seen either as $\bbP^2$ blown up at nine points or as $\bbP^1\times\bbP^1$ blown up at eight points. These maps become autonomous if the blow-up points are in a special position (support a pencil of cubic curves in $\bbP^2$, respectively a pencil of biquadratic curves in $\bbP^1\times\bbP^1$), so that the generalized Halphen surfaces become rational elliptic surfaces. In the generic case, the symmetry of a discrete Painlev\'e equation is the Weyl group $W(E_8^{(1)})$. One has a system of commuting maps which correspond to translational elements of $W(E_8^{(1)})$ associated to the roots of the lattice $E_8^{(1)}$. In the present note, we give a geometric construction of these commuting maps. For this, we use some novel birational involutions based on the above mentioned pencils of curves.
\end{abstract}

%%%%%%%%%%%%%%
%%%%%%%%%%%%%% 
\section{Introduction}
\label{intro}
%%%%%%%%%%%%%%
%%%%%%%%%%%%%%

QRT maps are celebrated integrable 2D maps introduced in \cite{QRT1, QRT2} and much studied since then, see also \cite{QRT book}. They can be simply defined using a \emph{pencil of biquadratic curves}
$$
\cQ_\mu=\Big\{(x,y)\in \bbC^2: Q_\mu(x,y):=Q_0(x,y)-\mu Q_\infty(x,y)=0\Big\},
$$
where $Q_0,Q_\infty$ are two polynomials of bidegree (2,2). Actually, we can consider this pencil in a compactification $\bbP^1\times \bbP^1$ of $\bbC^2$. The \emph{base set} $\mathcal B$ of the pencil is defined as the set of points through which all curves of the pencil pass or, equivalently, as the intersection $\{Q_0(x,y)=0\}\cap\{Q_\infty(x,y)=0\}$. It consists of eight points, counted with multiplicity. 

One defines the \emph{vertical switch} $i_1$ and the \emph{horizontal switch} $i_2$ as follows. Through any point $(x_0,y_0)\not\in \mathcal B$, there passes exactly one curve of the pencil, defined by the parameter value $\mu=\mu(x_0,y_0)=Q_0(x_0,y_0)/Q_\infty(x_0,y_0)$. The vertical line $\{x=x_0\}$ intersects $\cQ_\mu$ at exactly one further point $(x_0,y_1)$ which is defined to be $i_1(x_0,y_0)$; similarly, the horizontal line $\{y=y_0\}$ intersects $\cQ_\mu$ at exactly one further point $(x_1,y_0)$ which is defined to be $i_2(x_0,y_0)$. The QRT map is defined as
$$
f=i_1\circ i_2.
$$
Each of the maps $i_1$, $i_2$ is a birational involution on $\bbP^1\times \bbP^1$ with indeterminacy set $\mathcal B$. Likewise, the QRT map $f$ is a (dynamically nontrivial) birational map on $\bbP^1\times \bbP^1$, having $\mu(x,y)=Q_0(x,y)/Q_\infty(x,y)$ as an integral of motion. A generic fiber $\cQ_\mu$ is an elliptic curve, and $f$ acts on it as a translation with respect to the corresponding addition law. 

Already at the early days of the theory it was realised that this construction is morally (and birationally) equivalent to another one, producing integrable birational maps on $\bbP^2$ preserving a pencil of cubic curves. A quick definition is as follows.  Consider a \emph{pencil of cubic curves}
$$
\cC_\mu=\Big\{(x,y)\in \bbC^2: C_\mu(x,y):=C_0(x,y)-\mu C_\infty(x,y)=0\Big\},
$$
where $C_0,C_\infty$ are two polynomials of degree 3. Upon introducing homogeneous coordinates $[X:Y:Z]=[x:y:1]$ and lifting the polynomials $C_0,C_\infty$ to homogeneous polynomials of $X,Y,Z$ of degree 3, we can consider the situation in the compactification $\bbP^2$ of $\bbC^2$. The \emph{base set} $\mathcal B$ of the pencil is defined as the set of points through which all curves of the pencil pass or, equivalently, as the intersection of the two base cubics of the pencil. It consists of nine points $P_i$, $i=1,\ldots,9$, counted with multiplicity. 

For $i=1,\ldots,9$, one defines the \emph{Manin involution} $I_{i}$ as follows. For any point $P\not\in \mathcal B$, we define $I_i(P)$ as the third intersection point of the line $(P_iP)$ with the unique curve $C_\mu$ of the pencil passing through $P$. Now one can define \emph{Manin maps} $M_{ij}=I_{i}\circ I_{j}$ for any two distinct $i,j=1,\ldots,9$. Each $M_{ij}$ is a birational map on $\bbP^2$ preserving the pencil of cubic curves, which acts on a generic fiber as a translation with respect to the corresponding addition law. Manin involutions and Manin maps have been described and applied to algebro-geometric questions in \cite{Manin, Dolgachev}. In the theory of integrable systems, their relation to QRT maps was pointed out by Veselov \cite[p. 35]{Veselov}, see also \cite{Tsuda}. In Veselov's formulation, one sends the base points $P_i$, $P_j$ to $[1:0:0]$, $[0:1:0]$ by a projective transformation, then in the affine coordinates $[X:Y:Z]=[x:y:1]$ the cubic curves become biquadratic, and the Manin involutions $I_i$, $I_j$ become the QRT switches. This argument also delivers several birational maps commuting with a given QRT map, namely all other $M_{\ell m}$.

Manin involutions and Manin maps made a remarkable appearance in the geometric interpretation of the elliptic Painlev\'e equation \cite{KMNOY}. Recall that, according to the modern understanding of the discrete Painlev\'e equations, gradually established upon appearing of the pioneering paper by Sakai \cite{Sakai}, any such equation is a birational map representing a translation contained in the affine Weyl group $W(E_8^{(1)})$ acting on configurations of ten points in $\bbP^2$. The first nine points play the role of parameters, while the tenth one plays the role of the dependent variable of a discrete Painlev\'e equation. In general position, the first nine points undergo a certain evolution, rendering the non-autonomous nature of the discrete Painlev\'e equation. However, if these nine points are in a special position, supporting a pencil of cubic curves, they do not evolve, and the discrete Painlev\'e equation becomes autonomous. This is the case we are mainly interested in this paper.

Thus, autonomous versions of discrete Painlev\'e equations represent translations contained in the affine Weyl group $W(E_8^{(1)})$. Translations constitute a commutative subgroup of $W(E_8^{(1)})$ generated by those corresponding to the roots of the system $E_8^{(1)}$. The Weyl group acts by birational transformations (generated by Cremona inversions) on $\bbP^2$, or, better, on the blow-up variety
\begin{equation}\label{X}
X=Bl_{P_1,\ldots,P_9}\bbP^2.
\end{equation}
This induces an action by linear transfromations (generated by reflections) on the Picard lattice 
\begin{equation}\label{Pic X}
{\rm Pic}(X)=\bbZ\cE_0\oplus\bbZ\cE_1\oplus\ldots\oplus\bbZ\cE_9.
\end{equation}
Here $\cE_0$ is the divisor class of the total transform of a generic line, while $\cE_i$ are exceptional divisors of the blow-up of $\bbP^2$ at $P_i$. The Picard lattice carries the Minkowski scalar product (intersection number), with the only non-vanishing products of the basis elements being
\begin{equation}\label{scal prod X}
\langle \cE_0,\cE_0\rangle=1,\quad \langle \cE_i,\cE_i\rangle =-1, \; i=1,\ldots, 9.
\end{equation}
The elements $\alpha$ of the root system $E_8^{(1)}$ are defined by the relations $\langle\alpha,\delta\rangle=0$ and $\langle\alpha,\alpha\rangle=-2$, where
\begin{equation}\label{delta X}
\delta=3\cE_0-\cE_1-\ldots-\cE_9
\end{equation}
is the null root (the anti-canonical divisor). Any such $\alpha$ is equivalent $\!\!\!\pmod{\bbZ\delta}$ to one of the 120 roots given by
\begin{equation}
\cE_i-\cE_j \;\; (i<j)  \quad {\rm and} \quad \cE_0-\cE_i-\cE_j-\cE_k \;\; (i< j<k)
\end{equation}
(${9\choose 2}=36$ of the first kind and ${9\choose 3}=84$ of the second kind), or to one opposite to those. The action of the translation $T_\alpha$ corresponding to a root $\alpha$ on ${\rm Pic}(X)$ is given by the Kac formula \cite{Kac}:
\begin{equation}\label{Kac formula}
T_\alpha(\lambda)=\lambda+\langle\delta,\lambda\rangle\alpha-\Big(\frac{1}{2}\langle\delta,\lambda\rangle\langle\alpha,\alpha\rangle+\langle\alpha,\lambda\rangle\Big)\delta.
\end{equation}

The birational action of the Weyl group on $\bbP^2$ is defined in terms of the Cremona inversions centered at the triples of the base points. However, we are interested in a geometric description. Such a description for the action of $T_{\cE_i-\cE_j}$ was given in \cite{KMNOY}:
\begin{proposition}\label{prop Manin}{\bf \cite{KMNOY}}
The birational map $I_{j}\circ I_{i}$ on $\bbP^2$  induces the action of $T_{\cE_i-\cE_j}$ on ${\rm Pic}(X)$.
\end{proposition}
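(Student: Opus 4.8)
The plan is to realize each Manin involution $I_i$ as a biregular automorphism of the rational elliptic surface $X$, to compute the induced lattice map $(I_i)_*$ on ${\rm Pic}(X)$ directly from the geometry on a set of classes that spans ${\rm Pic}(X)\otimes\mathbb{Q}$, and then to compose the two and compare with the Kac formula \eqref{Kac formula}. First I would argue that, although $I_i$ is only birational on $\bbP^2$, its indeterminacy is concentrated at the base points $P_1,\ldots,P_9$ and is resolved by the blow-up: fiberwise, $I_i$ acts on each cubic $\cC_\mu$ as the involution sending $Q$ to the third point of $\overline{P_iQ}\cap\cC_\mu$, and this extends across the singular fibers. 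Hence $I_i\in{\rm Aut}(X)$, the pushforward $(I_i)_*$ is a well-defined lattice isometry, and $(I_i)_*=(I_i)^*$ because $I_i$ is an involution.

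Next I would pin down $(I_i)_*$ by evaluating it on ten explicit classes. Since $I_i$ preserves the pencil of cubics it fixes the fiber class, $(I_i)_*\delta=\delta$. Since $I_i$ maps every line through $P_i$ to itself, it fixes the class of such a line, $(I_i)_*(\cE_0-\cE_i)=\cE_0-\cE_i$. Finally, for $k\neq i$ the section $\cE_k$ meets each fiber $\cC_\mu$ in the base point $P_k$, and $I_i$ sends $P_k$ to the third intersection of the fixed line $\overline{P_iP_k}$ with $\cC_\mu$; as $\mu$ varies these points sweep out precisely the proper transform of $\overline{P_iP_k}$, whence $(I_i)_*\cE_k=\cE_0-\cE_i-\cE_k$. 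The ten classes $\delta$, $\cE_0-\cE_i$ and $\cE_k$ $(k\neq i)$ span ${\rm Pic}(X)\otimes\mathbb{Q}$, so these values determine $(I_i)_*$ completely, and one checks the resulting isometry is
\[
R_i(\lambda)=-\lambda+\langle\lambda,\cE_0-\cE_i\rangle\,\delta+\langle\lambda,\delta\rangle\,(\cE_0-\cE_i),
\]
i.e. the identity on the hyperbolic plane $\Pi_i:=\mathbb{Z}\delta+\mathbb{Z}(\cE_0-\cE_i)$ and $-\mathrm{id}$ on its orthogonal complement. (The same conclusion follows conceptually by noting that $(\pi_i,\mu)\colon X\to\bbP^1\times\bbP^1$, with $\pi_i$ the projection from $P_i$, is a double cover whose deck transformation is $I_i$, so the $(+1)$-eigenspace of $(I_i)_*$ is exactly the rank-two pullback lattice $\Pi_i$.)

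With $(I_j)_*=R_j$ in hand, functoriality of the pushforward gives $(I_j\circ I_i)_*=(I_j)_*\circ(I_i)_*=R_j\circ R_i$. Substituting $R_i$ and using $\langle\delta,\delta\rangle=0$, $\langle\delta,\cE_0-\cE_i\rangle=2$, and $(\cE_0-\cE_j)-(\cE_0-\cE_i)=\cE_i-\cE_j=:\alpha$, a short computation collapses to
\[
R_j\circ R_i(\lambda)=\lambda+\langle\delta,\lambda\rangle\,\alpha+\bigl(\langle\delta,\lambda\rangle-\langle\alpha,\lambda\rangle\bigr)\delta ,
\]
which is exactly \eqref{Kac formula} once $\langle\alpha,\alpha\rangle=-2$ is inserted. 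Hence $(I_j\circ I_i)_*=T_{\cE_i-\cE_j}$, as claimed.

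The two displayed identities are routine. The step that needs genuine care, and which I expect to be the main obstacle, is the first one: showing that $I_i$ really extends to a biregular automorphism of $X$ — equivalently, that blowing up the nine base points resolves all indeterminacy of $I_i$ and that $I_i$ contracts no divisor — so that $(I_i)_*$ is linear and composition-compatible. Closely tied to this is the identification of $I_i(\cE_k)$ with the proper transform of $\overline{P_iP_k}$, which relies on $\overline{P_iP_k}$ containing no further base point (valid in general position). I would also flag the bookkeeping of conventions: I have used the covariant (pushforward) action, for which $I_j\circ I_i$ realizes $T_{\cE_i-\cE_j}$; with the contravariant (pullback) convention one obtains $T_{\cE_j-\cE_i}=T_{\cE_i-\cE_j}^{-1}$, which is the same statement after interchanging $i$ and $j$.
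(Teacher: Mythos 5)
Your proof is correct, and its skeleton is the same as the paper's: reduce the statement to an identity of linear maps on ${\rm Pic}(X)$ and match the composition against the Kac formula \eqref{Kac formula}. Where you genuinely diverge is in the treatment of $(I_i)_*$. The paper simply quotes Manin's formulas \eqref{Ii 0}--\eqref{Ii j} and then checks the composition entry-by-entry against \eqref{Tij 0}--\eqref{Tij l}; you instead re-derive $(I_i)_*$ from its geometrically evident values on the spanning set $\delta$, $\cE_0-\cE_i$, $\cE_k$ $(k\neq i)$ --- which is precisely the method the paper itself employs later for Propositions \ref{prop I1234} and \ref{prop I12} --- and you package the answer as the coordinate-free isometry $R_i$ (identity on the hyperbolic plane $\bbZ\delta+\bbZ(\cE_0-\cE_i)$, minus the identity on its orthogonal complement). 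That closed form buys you a one-line, basis-free verification of $R_j\circ R_i=T_{\cE_i-\cE_j}$ valid for all $\lambda$ at once, and the double-cover interpretation via $(\pi_i,\mu)$ explains conceptually why $R_i$ has this shape; I checked that $R_i$ reproduces \eqref{Ii 0}--\eqref{Ii j} and that your composition identity agrees with \eqref{Kac formula} for $\langle\alpha,\alpha\rangle=-2$. The one point you rightly flag --- that blowing up the nine base points resolves $I_i$ to an automorphism of $X$, so that pushforwards compose functorially --- is not addressed in the paper either (it is implicit in citing \cite{Manin}), so your proof is, if anything, more self-contained on the key input while being identical in overall strategy.
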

Since the proof of this statement can serve as a blueprint for all other proofs of this paper, and for convenience of the reader, we reproduce it here.

\smallskip
\noindent
{\it Proof of Proposition \ref{prop Manin}.}  A direct computation with the Kac formula \eqref{Kac formula} gives the action of $T_{\cE_i-\cE_j}$ on ${\rm Pic}(X)$:
\begin{eqnarray}
T_{\cE_i-\cE_j}(\cE_0) & = & \cE_0+3(\cE_i-\cE_j)+3\delta \;=\; 10\cE_0-6\cE_j-3\sum_{k\neq i,j}\cE_k,\label{Tij 0}\\
T_{\cE_i-\cE_j}(\cE_i) & = & \cE_i+(\cE_i-\cE_j)+2\delta \;=\; 6\cE_0-3\cE_j-2\sum_{k\neq i,j}\cE_k, \label{Tij i}\\
T_{\cE_i-\cE_j}(\cE_j) & = & \cE_j+(\cE_i-\cE_j) \;=\; \cE_i, \label{Tij j}\\
T_{\cE_i-\cE_j}(\cE_\ell) & = & \cE_\ell+(\cE_i-\cE_j) +\delta\;=\; 3\cE_0-2\cE_j-\sum_{k\neq i,j,\ell}\cE_k. \label{Tij l}
\end{eqnarray}
On the other hand, the action of $I_i$ on ${\rm Pic}(X)$ was already computed in \cite{Manin}:
\begin{eqnarray}
(I_{i})_*(\cE_0) & = & 5\cE_0-4\cE_i-\sum_{k\neq i} \cE_k, \label{Ii 0}\\
(I_{i})_*(\cE_i) & = & 4\cE_0-3\cE_i-\sum_{k\neq i}\cE_k, \label{Ii i}\\
(I_{i})_*(\cE_j) & = & \cE_0 -\cE_i-\cE_j, \;\; j\neq i. \label{Ii j}
\end{eqnarray}
Now a straightforward computation of the action of $(I_j\circ I_i)_*$ on ${\rm Pic}(X)$ shows that it coincides with \eqref{Tij 0}--\eqref{Tij l}.
\qed

\medskip
Finding a similar description for the action of $T_{\cE_0-\cE_i-\cE_j-\cE_k}$ was mentioned in \cite{KMNOY} as an open problem, and seems to remain open to this day. Our main result here is the answer to this quest.

%%%%%%%%%%%%%%
%%%%%%%%%%%%%% 
\section{Involutions defined by pencils of cubics in $\mathbb{P}^2$}
\label{s:cubics}
%%%%%%%%%%%%%%

\begin{definition}\label{def cubic inv 1234}
Consider a pencil of cubics in $\bbP^2$ with base points $P_i$, $i=1,\ldots,9$. For any four distinct base points $P_i,P_j,P_k,P_\ell$, we define the involution $I_{i,j,k,\ell}$ as follows. Through a generic point $P$ passes a unique cubic $C_P$ of the pencil. Moreover, through $P_i,P_j,P_k,P_\ell,P$ passes a unique conic $S$. We set  
$$
I_{i,j,k,\ell}(P) = \widetilde{P},
$$
where $\widetilde{P}$ is the sixth intersection point of $C_P$ with $S$, so that $C_P \cap S = \{P_i, P_j,P_k,P_\ell,P,\widetilde{P}\}$.
\end{definition}

With the help of these novel geometric involutions, the solution of the above mentioned problem is formulated as follows.

\begin{theorem}\label{th cubic I1234 circ I4}
The birational map $I_{i,j,k,\ell}\circ I_{\ell}$ on $\bbP^2$  induces the action of $T_{\cE_0-\cE_i-\cE_j-\cE_k}$ on ${\rm Pic}(X)$.
\end{theorem}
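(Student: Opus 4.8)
The plan is to follow the blueprint of the proof of Proposition \ref{prop Manin} verbatim: first compute the linear action of $T_{\cE_0-\cE_i-\cE_j-\cE_k}$ on ${\rm Pic}(X)$ from the Kac formula, then determine the induced action $(I_{i,j,k,\ell})_*$ of the new involution, and finally check by a matrix multiplication that $(I_{i,j,k,\ell})_*\circ (I_\ell)_*$ reproduces it. The first ingredient is routine: writing $\alpha=\cE_0-\cE_i-\cE_j-\cE_k$, one has $\langle\alpha,\alpha\rangle=-2$ and $\langle\delta,\alpha\rangle=0$, so \eqref{Kac formula} collapses to $T_\alpha(\lambda)=\lambda+\langle\delta,\lambda\rangle\alpha+(\langle\delta,\lambda\rangle-\langle\alpha,\lambda\rangle)\delta$. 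Evaluating on the basis gives, for instance, $T_\alpha(\cE_i)=\cE_0-\cE_j-\cE_k$, then $T_\alpha(\cE_\ell)=4\cE_0-2\cE_i-2\cE_j-2\cE_k-\sum_{m\neq i,j,k,\ell}\cE_m$, and $T_\alpha(\cE_0)=10\cE_0-5\cE_i-5\cE_j-5\cE_k-2\sum_{m\neq i,j,k}\cE_m$. The second ingredient, $(I_\ell)_*$, is quoted directly from \eqref{Ii 0}--\eqref{Ii j}.

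The heart of the matter is to compute $(I_{i,j,k,\ell})_*$ directly, in the spirit of Manin's derivation of \eqref{Ii 0}--\eqref{Ii j}. Since $I_{i,j,k,\ell}$ preserves every fibre of the pencil, it is a fibre-preserving birational self-map of $X$; as the rational elliptic surface $X$ is relatively minimal, it is in fact biregular and hence induces a well-defined isometry of ${\rm Pic}(X)$ fixing $\delta$, so all that remains is to compute it. The key tool is the linear-equivalence description of the involution on each fibre: a conic cuts out on $C_P$ the class $2\,\cE_0|_{C_P}$, so the defining condition $C_P\cap S=\{P_i,P_j,P_k,P_\ell,P,\widetilde P\}$ reads $\widetilde P\sim 2\,\cE_0|_{C_P}-P_i-P_j-P_k-P_\ell-P$ on $C_P$, i.e.\ $I_{i,j,k,\ell}$ is the fibrewise involution $P\mapsto D_{C_P}-P$ for the degree-one class $D_{C_P}=2\,\cE_0|_{C_P}-P_i-P_j-P_k-P_\ell$. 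From this one reads off the images of the exceptional sections. For $m\notin\{i,j,k,\ell\}$ the point of class $2\,\cE_0|_{C_P}-P_i-P_j-P_k-P_\ell-P_m$ sweeps out, as $P$ runs through the pencil, precisely the fixed conic through $P_i,P_j,P_k,P_\ell,P_m$, giving $(I_{i,j,k,\ell})_*\cE_m=2\cE_0-\cE_i-\cE_j-\cE_k-\cE_\ell-\cE_m$; an analogous analysis of the class $2\,\cE_0|_{C_P}-2P_i-P_j-P_k-P_\ell$ identifies $(I_{i,j,k,\ell})_*\cE_i$ as the quintic $5\cE_0-3\cE_i-2\cE_j-2\cE_k-2\cE_\ell-\sum_{m\notin\{i,j,k,\ell\}}\cE_m$, and symmetrically for $j,k,\ell$. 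Tracking the image of a generic line in the same way yields $(I_{i,j,k,\ell})_*\cE_0=11\cE_0-5(\cE_i+\cE_j+\cE_k+\cE_\ell)-2\sum_{m\notin\{i,j,k,\ell\}}\cE_m$, so that $I_{i,j,k,\ell}$ is a plane Cremona map of degree $11$ with base points of multiplicity $5$ at $P_i,P_j,P_k,P_\ell$ and of multiplicity $2$ at the remaining five base points. That these numbers are coherent is encoded in the fact that $(I_{i,j,k,\ell})_*$ must be an isometry fixing $\delta$: the two conditions $\langle(I_{i,j,k,\ell})_*\cE_0,\delta\rangle=3$ and $\langle(I_{i,j,k,\ell})_*\cE_0,(I_{i,j,k,\ell})_*\cE_0\rangle=1$ are exactly the homaloidal relations $\sum_p m_p=3(d-1)$ and $\sum_p m_p^2=d^2-1$, which here read $30=30$ and $120=120$.

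With $(I_{i,j,k,\ell})_*$ and $(I_\ell)_*$ in hand, the proof concludes exactly as in Proposition \ref{prop Manin}: one computes $(I_{i,j,k,\ell}\circ I_\ell)_*=(I_{i,j,k,\ell})_*\circ(I_\ell)_*$ on each basis vector and checks that the result agrees with the action of $T_{\cE_0-\cE_i-\cE_j-\cE_k}$ found in the first paragraph. I expect the only genuine difficulty to lie in the second paragraph, namely in justifying the degree and the multiplicity pattern of $I_{i,j,k,\ell}$ rigorously, since one must control the behaviour of the map at its nine indeterminacy points. The fibrewise linear-equivalence description is what makes this tractable, as it reduces the computation of each $(I_{i,j,k,\ell})_*\cE_p$ to identifying the explicit locus (a fixed conic, respectively a quintic) swept out by a given degree-one divisor class as the cubic varies in the pencil.
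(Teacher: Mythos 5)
Your proposal is correct and follows the same overall strategy as the paper: compute $T_{\cE_0-\cE_i-\cE_j-\cE_k}$ from the Kac formula, determine $(I_{i,j,k,\ell})_*$ on ${\rm Pic}(X)$, and compose with Manin's formulas \eqref{Ii 0}--\eqref{Ii j} for $(I_\ell)_*$. All the formulas you obtain, including the degree-$11$ Cremona pattern $(11;5^4,2^5)$ for $I_{i,j,k,\ell}$, agree with the paper's Proposition \ref{prop I1234}. The one place where you genuinely diverge is the derivation of $(I_{i,j,k,\ell})_*$. The paper assembles a system of linear equations from classes that are manifestly preserved or exchanged: the anticanonical class $\delta$, the pairwise swaps $(I_{1,2,3,4})_*(\cE_0-\cE_a-\cE_b)=\cE_0-\cE_c-\cE_d$ for the three splittings $\{a,b\}\sqcup\{c,d\}=\{1,2,3,4\}$, and the contraction of the conic through $P_1,\ldots,P_4,P_m$ onto $P_m$ for $m=5,\ldots,9$; solving that system yields \eqref{I1234 0}--\eqref{I1234 5..9}. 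You instead use the fibrewise description $P\mapsto D_{C_P}-P$ and identify the image of each exceptional section as an explicit curve class, checking coherence via the homaloidal relations. Both routes are sound: the paper's is more elementary (pure linear algebra on ${\rm Pic}(X)$ once the invariant classes are spotted), while yours makes the translation structure on the fibres more transparent and explains a priori why $(I_{i,j,k,\ell})_*$ is a well-defined isometry fixing $\delta$ (via relative minimality). The only step you should flesh out is the claim about $(I_{i,j,k,\ell})_*\cE_0$, which you justify only by ``tracking the image of a generic line''; the cleanest shortcut is to note that $(I_{i,j,k,\ell})_*\delta=\delta$ forces $3\,(I_{i,j,k,\ell})_*\cE_0=\delta+\sum_{m=1}^{9}(I_{i,j,k,\ell})_*\cE_m$, which gives \eqref{I1234 0} directly from the images of the exceptional classes you have already computed.
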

\begin{proof} For notational convenience, we take $(i,j,k,\ell)=(1,2,3,4)$. We compare the action of the both sides on the Picard lattice. First of all, a direct computation with the Kac formula \eqref{Kac formula} gives:
\begin{eqnarray*}
T_{\cE_0-\cE_1-\cE_2-\cE_3}(\cE_0) & = & 10\cE_0-5(\cE_1+\cE_2+\cE_3)-2(\cE_4+\ldots+\cE_9),\\
T_{\cE_0-\cE_1-\cE_2-\cE_3}(\cE_i) & = & \cE_0-\cE_1-\cE_2-\cE_3+\cE_i,\;\;i=1,2,3,\\
T_{\cE_0-\cE_1-\cE_2-\cE_3}(\cE_i) & = & 4\cE_0-2(\cE_1+\cE_2+\cE_3)-(\cE_4+\ldots+\cE_9)+\cE_i,\;\;i=4,\ldots,9.
\end{eqnarray*}
Next, we compute the action of $I_{1,2,3,4}$ on ${\rm Pic}(X)$.
\begin{proposition}\label{prop I1234}
The action of $I_{1,2,3,4}$ on ${\rm Pic}(X)$ is given by:
\begin{eqnarray}
(I_{1,2,3,4})_*(\cE_0) & = &11\cE_0-5(\cE_1+\ldots+\cE_4)-2(\cE_5+\ldots+\cE_9), \label{I1234 0}\\
(I_{1,2,3,4})_*(\cE_i) & = & 5\cE_0-2(\cE_1+\ldots+\cE_4)-\cE_i-(\cE_5+\ldots+\cE_9),\;\;i=1,\ldots,4,\label{I1234 1..4}\\
(I_{1,2,3,4})_*(\cE_i) & = & 2\cE_0-(\cE_1+\ldots+\cE_4)-\cE_i,\;\;i=5,\ldots,9.\label{I1234 5..9}
\end{eqnarray}
\end{proposition}
\begin{proof}
We observe that a generic cubic curve of the pencil is mapped to itself, therefore
$$
(I_{1,2,3,4})_*(3\cE_0-\cE_1-\ldots-\cE_9)=3\cE_0-\cE_1-\ldots-\cE_9.
$$ 
Any conic through $P_1,P_2,P_3,P_4$ is mapped to itself, therefore
$$
(I_{1,2,3,4})_*(2\cE_0-\cE_1-\ldots-\cE_4)=2\cE_0-\cE_1-\ldots-\cE_4.
$$ 
The latter equation is actually a consequence of stronger relations. By definition, the point $\widetilde P=I_{1,2,3,4}(P)$ belongs to the conic through $P_1,P_2,P_3,P_4$ and $P$. In particular, if $P$ belongs to the line $(P_1P_2)$ then $\widetilde P$ belongs to the line $(P_3P_4)$, and vice versa, so that 
$$
(I_{1,2,3,4})_*(\cE_0-\cE_1-\cE_2)=\cE_0-\cE_3-\cE_4,  \quad (I_{1,2,3,4})_*(\cE_0-\cE_3-\cE_4)=\cE_0-\cE_1-\cE_2.
$$
Similarly, we find two further pairs of relations:
$$
(I_{1,2,3,4})_*(\cE_0-\cE_1-\cE_3)=\cE_0-\cE_2-\cE_4, \quad (I_{1,2,3,4})_*(\cE_0-\cE_2-\cE_4)=\cE_0-\cE_1-\cE_3,
$$
$$
(I_{1,2,3,4})_*(\cE_0-\cE_1-\cE_4)=\cE_0-\cE_2-\cE_3, \quad (I_{1,2,3,4})_*(\cE_0-\cE_2-\cE_3)=\cE_0-\cE_1-\cE_4.
$$
For any $i=5,\ldots,9$, the conic through $P_1,P_2,P_3,P_4,P_i$ is blown down to $P_i$. Since we are dealing with an involution, each $P_i$ is blown up to the conic through $P_1,P_2,P_3,P_4,P_i$. This gives us \eqref{I1234 5..9}.
Solving this system of linear equations results in the statement of proposition.
\end{proof}

Composing the linear maps given in \eqref{Ii 0}--\eqref{Ii i} and  in \eqref{I1234 0}--\eqref{I1234 5..9}, we finish the proof of the theorem. 
\end{proof}

%%%%%%%%%%%%%%
%%%%%%%%%%%%%% 
\section{Involutions for pencils of biquadratic curves in $\mathbb{P}^1 \times \mathbb{P}^1$}
\label{s:2d_qrt}
%%%%%%%%%%%%%%
%%%%%%%%%%%%%%

\subsection{Horizontal and vertical switches}

We now discuss similar results for the symmetries of the QRT maps in the $\bbP^1\times\bbP^1$ formulation. Denote by $P_1,\ldots,P_8$ the base set of the pencil of biquadratic curves $\{\mathcal Q_\mu\}$, and let
\begin{equation}\label{Y}
Y=Bl_{P_1,\ldots,P_8}\bbP^1\times\bbP^1.
\end{equation}
This surface is birationally equivalent to \eqref{X}, and their Picard lattices are isomorphic:
\begin{equation}\label{Pic Y}
{\rm Pic}(Y)=\bbZ H_1\oplus\bbZ H_2\oplus\bbZ E_1\oplus\ldots\oplus\bbZ E_8.
\end{equation}
Here $H_1$, $H_2$  are the divisor classes of the total transforms of a generic vertical, resp. horizontal line, while $E_i$ are exceptional divisors of the blow-up of $\bbP^1\times\bbP^1$ at $P_i$. The Minkowski scalar product (intersection number) has the only non-vanishing products of the basis elements 
\begin{equation}\label{scal prod X}
\langle H_1,H_2\rangle=1,\quad \langle E_i,E_i\rangle =-1, \; i=1,\ldots 8.
\end{equation}
An isomorphism of the Picard lattices can be established via 
\[
\cE_0=H_1+H_2-E_1, \;\; \cE_1=H_1-E_1, \;\; \cE_2=H_2-E_1,\;\; \cE_{i+1}=E_i \;\;{\rm for}\; \; i=2,\ldots,8.
\]
The null root is now given by
\begin{equation}\label{delta Y}
\delta=2H_1+2H_2-E_1-\ldots-E_8.
\end{equation}
Some typical roots of the $E_8^{(1)}$ lattice are:
\begin{equation}
H_1-H_2,\quad E_i-E_j, \quad H_1-E_i-E_j,\quad H_2-E_i-E_j, \quad H_1+H_2-E_i-E_j-E_k-E_\ell. \quad
\end{equation}
A well-known result about the QRT map is the following.
\begin{proposition}\label{prop QRT}
The QRT map $f=i_1\circ i_2$ on $\bbP^1\times \bbP^1$  induces the action of $T_{H_1-H_2}$ on ${\rm Pic}(Y)$.
\end{proposition}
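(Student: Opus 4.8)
The plan is to follow the blueprint of Proposition~\ref{prop Manin} and Theorem~\ref{th cubic I1234 circ I4}: I will compute the linear action of $T_{H_1-H_2}$ on ${\rm Pic}(Y)$ from the Kac formula, determine the action of the two switches $i_1,i_2$ on ${\rm Pic}(Y)$ by purely geometric considerations, and finally verify that the composition of the latter two reproduces the former. For the first step one checks that $H_1-H_2$ is indeed a root ($\langle H_1-H_2,H_1-H_2\rangle=-2$ and $\langle H_1-H_2,\delta\rangle=0$, using \eqref{delta Y}) and then substitutes $\lambda=H_1,H_2,E_i$ into \eqref{Kac formula}. This routine computation gives
\[
T_{H_1-H_2}(H_1)=9H_1+4H_2-3\sum_i E_i,\qquad T_{H_1-H_2}(H_2)=4H_1+H_2-\sum_i E_i,\qquad T_{H_1-H_2}(E_i)=3H_1+H_2-\sum_{j\neq i}E_j.
\]

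For the second step the crucial point is the geometric description of $(i_1)_*$. Two facts are immediate: since $i_1$ fixes the $x$-coordinate it preserves every vertical line, so $(i_1)_*(H_1)=H_1$; and since $i_1$ preserves the pencil of biquadratics, $(i_1)_*(\delta)=\delta$. The heart of the matter is the action on the exceptional divisors, which I would obtain exactly as in the blow-down step of Proposition~\ref{prop I1234}. The vertical line $\ell_i=\{x=a_i\}$ through a base point $P_i=(a_i,b_i)$ meets each curve $\cQ_\mu$ of the pencil in $P_i$ and exactly one further point; hence for a generic $P\in\ell_i$ the biquadratic through $P$ cuts $\ell_i$ precisely in $P$ and $P_i$, so that $i_1(P)=P_i$. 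Thus $i_1$ blows the line $\ell_i$ down to $P_i$, and being an involution it blows $P_i$ up to the strict transform of $\ell_i$; on the Picard lattice this reads $(i_1)_*(E_i)=H_1-E_i$. Imposing $(i_1)_*(\delta)=\delta$ then determines the remaining image and yields
\[
(i_1)_*(H_1)=H_1,\qquad (i_1)_*(H_2)=4H_1+H_2-\sum_i E_i,\qquad (i_1)_*(E_i)=H_1-E_i.
\]
The corresponding formulas for $i_2$ follow at once by the symmetry $x\leftrightarrow y$, that is, $H_1\leftrightarrow H_2$ with the $E_i$ fixed.

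For the last step I would simply compose the two linear involutions, computing $(i_1)_*\circ(i_2)_*$ on the generators $H_1,H_2,E_i$, and check that the result agrees term by term with the Kac-formula expressions above; for instance $(i_1)_*(i_2)_*(E_i)=(i_1)_*(H_2-E_i)=3H_1+H_2-\sum_{j\neq i}E_j$, exactly $T_{H_1-H_2}(E_i)$. I expect the main obstacle to be not the algebra but the clean geometric justification of $(i_1)_*(E_i)=H_1-E_i$, which requires understanding $i_1$ at the base points: one must verify that, after blowing up $P_i$, the indeterminacy of $i_1$ there is resolved and the exceptional divisor $E_i$ is mapped onto the strict transform of the vertical line $\ell_i$. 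A minor point to fix along the way is the composition convention for the induced linear maps, so that $i_1\circ i_2$ (and not $i_2\circ i_1$) gives $T_{H_1-H_2}$ rather than its inverse $T_{H_2-H_1}$; since in the autonomous case $i_1,i_2$ lift to automorphisms of the rational elliptic surface $Y$, one has $(i_1\circ i_2)_*=(i_1)_*\circ(i_2)_*$ and the identification is unambiguous.
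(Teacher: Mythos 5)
Your proposal is correct and follows essentially the same route as the paper: compute the induced actions of $i_1,i_2$ on ${\rm Pic}(Y)$, compose them, and match the result against the Kac formula for $T_{H_1-H_2}$; your formulas for $(i_1)_*,(i_2)_*$ and for $T_{H_1-H_2}$ agree exactly with \eqref{i1 H1 H2}--\eqref{i2 Ej} and \eqref{f H1}--\eqref{f Ej}. The only difference is that you spell out the blow-down/$\delta$-preservation derivation of the switch formulas, which the paper leaves as "easily found".
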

\begin{proof} One easily finds the maps induced on ${\rm Pic}(Y)$ by the vertical and the horizontal switches:
\begin{equation} \label{i1 H1 H2}
(i_1)_*(H_1)=H_1,\quad
(i_1)_*(H_2)=4H_1+H_2-\sum_{k=1}^8 E_k,
\end{equation}
\begin{equation}\label{i1 Ej}
(i_1)_*(E_j) = H_1-E_j, \;\; j=1,\ldots,8, 
\end{equation}
resp.
\begin{equation} \label{i2 H1 H2}
(i_2)_*(H_1) = H_1+4H_2-\sum_{k=1}^8 E_k,\quad
(i_2)_*(H_2) = H_2,
\end{equation}
\begin{equation}\label{i2 Ej}
(i_2)_*(E_j) = H_2-E_j, \;\; j=1,\ldots,8, 
\end{equation}
Now a simple computation shows that the action of $(i_1\circ i_2)_*$ on ${\rm Pic}(Y)$ coincides with what  the Kac formula \eqref{Kac formula} gives for the action of $T_{H_1-H_2}$:
\begin{eqnarray}
T_{H_1-H_2}(H_1) & = & H_1+2(H_1-H_2)+3\delta \;=\; 9H_1+4H_2-3\sum_{k=1}^8 E_k,\label{f H1}\\
T_{H_1-H_2}(H_2) & = & H_2+2(H_1-H_2)+\delta \;=\; 4H_1+H_2-\sum_{k=1}^8 E_k, \label{f H2}\\
T_{H_1-H_2}(E_j) & = & E_j+(H_1-H_2) +\delta\;=\; 3H_1+H_2-\sum_{k\neq j} E_k. \label{f Ej}
\end{eqnarray}
This finishes the proof.
\end{proof}

\subsection{Involutions along (1,1)-curves}

To the best of our knowledge, a geometric interpretation for translations corresponding to other roots of the $E_8^{(1)}$ lattice is not available in the literature. We are now going to close this gap.

\begin{definition}\label{def biquadratic inv 12}
Consider a pencil of biquadratic curves in $\bbP^1\times\bbP^1$ with base points $P_i$, $i=1,\ldots,8$. For any two distinct base points $P_i,P_j$, we define the involution $I_{i,j}$ as follows. Through a generic point $P$ passes a unique (2,2)-curve $Q_P$ of the pencil. Moreover, through $P_i,P_j,P$ there passes a unique (1,1)-curve $S$. We set  
$$
I_{i,j}(P) = \widetilde{P},
$$
where $\widetilde{P}$ is the fourth intersection point of $Q_P$ with $S$, so that $Q_P \cap S = \{P_i, P_j,P,\widetilde{P}\}$.
\end{definition}

These involutions have appeared previously in the literature: they are particular instances of more general involutions related to families of hyperelliptic curves of higher genus in $\bbP^1\times\bbP^1$, given in \cite[sect. 3.1]{NY}, while their $\bbP^2$ version was given in \cite[sect. 4.3]{PSWZ}. They will allow us to generate all the interesting translations in $W(E_8^{(1)})$.

\begin{theorem}\label{th biquadratic I12}\quad

\begin{enumerate}
\item The birational map $I_{i,j}\circ I_{i,k}$ on $\bbP^1\times\bbP^1$  induces the action of $T_{E_j-E_k}$ on ${\rm Pic}(Y)$.
\item The birational map $I_{i,j}\circ i_2$ on $\bbP^1\times\bbP^1$  induces the action of $T_{H_1-E_i-E_j}$ on ${\rm Pic}(Y)$.
\item The birational map $I_{i,j}\circ i_1$ on $\bbP^1\times\bbP^1$  induces the action of $T_{H_2-E_i-E_j}$ on ${\rm Pic}(Y)$.
\end{enumerate}
\end{theorem}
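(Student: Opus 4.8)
The plan is to reproduce, in the $\bbP^1\times\bbP^1$ setting, the exact strategy of Theorem \ref{th cubic I1234 circ I4}: reduce all three statements to a single computation, namely the action of the new involution $I_{i,j}$ on ${\rm Pic}(Y)$, and then obtain (1)--(3) by composing this linear map with the already-known actions $(I_{i,k})_*$, $(i_2)_*$, $(i_1)_*$ and matching against the Kac formula \eqref{Kac formula}. Thus the heart of the proof is an analogue of Proposition \ref{prop I1234}, the claim that
\[
\begin{aligned}
(I_{i,j})_*(H_1) &= 3H_1 + 4H_2 - 3E_i - 3E_j - \sum_{m \neq i,j} E_m,\\
(I_{i,j})_*(H_2) &= 4H_1 + 3H_2 - 3E_i - 3E_j - \sum_{m \neq i,j} E_m,\\
(I_{i,j})_*(E_i) &= 3H_1 + 3H_2 - 3E_i - 2E_j - \sum_{m \neq i,j} E_m,\\
(I_{i,j})_*(E_j) &= 3H_1 + 3H_2 - 2E_i - 3E_j - \sum_{m \neq i,j} E_m,\\
(I_{i,j})_*(E_m) &= H_1 + H_2 - E_i - E_j - E_m \quad (m \neq i,j).
\end{aligned}
\]

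To establish this formula I would collect exactly ten independent incidence relations, enough to pin down a rank-$10$ isometry. First, a generic $(2,2)$-curve of the pencil is preserved, so $(I_{i,j})_*(\delta)=\delta$ with $\delta$ as in \eqref{delta Y}. Second---and this is the genuinely new geometric input---I would analyse the degenerations of the auxiliary $(1,1)$-curve $S$ from Definition \ref{def biquadratic inv 12}. Since a $(1,1)$-curve meets a vertical line (class $H_1$) in a single point, if $P$ lies on the vertical line through $P_i$ then $S$ through $P_i,P_j,P$ must contain that whole vertical line, so $S$ splits as (vertical line through $P_i$)\,$\cup$\,(horizontal line through $P_j$); intersecting $Q_P$ with each component shows $\widetilde P=I_{i,j}(P)$ lies on the horizontal line through $P_j$, whence $(I_{i,j})_*(H_1-E_i)=H_2-E_j$, together with the three symmetric relations $(I_{i,j})_*(H_2-E_i)=H_1-E_j$, $(I_{i,j})_*(H_1-E_j)=H_2-E_i$, $(I_{i,j})_*(H_2-E_j)=H_1-E_i$. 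Third, if $P$ lies on the $(1,1)$-curve through $P_i,P_j,P_m$ for some $m\neq i,j$, this curve \emph{is} $S$, and $Q_P\cap S=\{P_i,P_j,P_m,P\}$ forces $\widetilde P=P_m$; hence this curve is contracted to $P_m$, giving $(I_{i,j})_*(H_1+H_2-E_i-E_j-E_m)=E_m$. Solving the resulting linear system yields the displayed formula, which I would then double-check by verifying that it preserves the intersection form \eqref{scal prod X} and fixes $H_1+H_2-E_i-E_j$.

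With $(I_{i,j})_*$ in hand, the three items become routine linear algebra. For (1), I would form the composite $(I_{i,j})_*\circ(I_{i,k})_*$ and compare it with $T_{E_j-E_k}$ computed from \eqref{Kac formula}; for (2), I would compose $(I_{i,j})_*$ with $(i_2)_*$ from \eqref{i2 H1 H2}--\eqref{i2 Ej} and compare with $T_{H_1-E_i-E_j}$; for (3), I would compose with $(i_1)_*$ from \eqref{i1 H1 H2}--\eqref{i1 Ej} and compare with $T_{H_2-E_i-E_j}$. One must mind the composition order: the birational map $I_{i,j}\circ i_2$ induces $(I_{i,j})_*\circ(i_2)_*$, and likewise for the others.

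The main obstacle is the geometric step determining $(I_{i,j})_*$: correctly reading off that $S$ degenerates into a vertical-plus-horizontal pair of lines in the relevant limits, tracking precisely where $\widetilde P$ lands, and confirming that the eleven relations above furnish ten \emph{independent} constraints (the fixed class $H_1+H_2-E_i-E_j$ being a linear consequence, not an extra condition). Everything downstream is mechanical bookkeeping; as a partial sanity check one verifies on $H_1$ and $H_2$ that $(I_{i,j})_*\circ(i_2)_*$ already reproduces the Kac-formula values of $T_{H_1-E_i-E_j}$, which indicates that the scheme closes.
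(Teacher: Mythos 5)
Your proposal is correct and follows essentially the same route as the paper: it establishes the action of $(I_{i,j})_*$ on ${\rm Pic}(Y)$ (your formulas agree exactly with the paper's Proposition~\ref{prop I12}, derived from the same incidence relations — preservation of the pencil class, the degeneration of the $(1,1)$-curve into a vertical plus a horizontal line, and the contraction of the $(1,1)$-curve through $P_i,P_j,P_m$ to $P_m$) and then concludes by composing with $(i_1)_*$, $(i_2)_*$, $(I_{i,k})_*$ and comparing with the Kac formula \eqref{Kac formula}. No substantive difference from the paper's argument.
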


For the proof, one has to determine the action induced by $I_{i,j}$ on ${\rm Pic}(Y)$. 

\begin{proposition}\label{prop I12}
The action of $I_{i,j}$ on ${\rm Pic}(Y)$ is given by:
\begin{eqnarray}
(I_{i,j})_*(H_1) & = &3H_1+4H_2-3(E_i+E_j)-\sum_{k\neq i,j}E_k, \label{Iij H1}\\
(I_{i,j})_*(H_2) & = & 4H_1+3H_2-3(E_i+E_j)-\sum_{k\neq i,j}E_k, \label{Iij H2}\\
(I_{i,j})_*(E_i) & = & 3(H_1+H_2)-3E_i-2E_j-\sum_{k\neq i,j}E_k, \label{Iij Ei}\\
(I_{i,j})_*(E_j) & = & 3(H_1+H_2)-2E_i-3E_j-\sum_{k\neq i,j}E_k, \label{Iij Ej}\\
(I_{i,j})_*(E_k) & = & H_1+H_2-E_i-E_j-E_k, \;\; k\neq i,j. \label{Iij Ek}
\end{eqnarray}
\end{proposition}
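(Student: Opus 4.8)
The plan is to follow the blueprint of the proof of Proposition~\ref{prop I1234}: collect enough geometric facts about $I_{i,j}$ to write down its action on a spanning set of divisor classes, and then solve the resulting linear system. Since $(I_{i,j})_*$ acts on the rank-$10$ lattice $\mathrm{Pic}(Y)=\bbZ H_1\oplus\bbZ H_2\oplus\bbZ E_1\oplus\ldots\oplus\bbZ E_8$, and since \eqref{Iij Ek} already pins down the six classes $E_k$ with $k\neq i,j$, it suffices to determine $(I_{i,j})_*$ on the four remaining generators $H_1,H_2,E_i,E_j$; I therefore need four independent relations among these.

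First I would record two tautological invariances. A generic $(2,2)$-curve of the pencil is mapped to itself, so the pencil class is fixed: $(I_{i,j})_*(\delta)=\delta$ with $\delta=2H_1+2H_2-\sum_k E_k$. Next, any $(1,1)$-curve $S$ through $P_i$ and $P_j$ is invariant: for generic $P\in S$ the unique $(1,1)$-curve through $P_i,P_j,P$ is $S$ itself, and by Definition~\ref{def biquadratic inv 12} the point $\widetilde{P}$ is a further intersection point of $S$ with $Q_P$, hence $\widetilde{P}\in S$. This yields $(I_{i,j})_*(H_1+H_2-E_i-E_j)=H_1+H_2-E_i-E_j$. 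The exceptional relation \eqref{Iij Ek} comes from a blow-down analysis: the $(1,1)$-curve $S_k$ through $P_i,P_j,P_k$ meets each pencil curve in four points, three of which are the base points $P_i,P_j,P_k$; for $P\in S_k$ this forces $Q_P\cap S_k=\{P_i,P_j,P_k,P\}$, so $\widetilde{P}=P_k$. Thus $S_k$ (class $H_1+H_2-E_i-E_j-E_k$) is contracted to $P_k$, and by involutivity $E_k$ is sent to $S_k$.

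The crux — and the step that supplies the two symmetry-breaking relations — is the analysis of the lines through $P_i$. For generic $P$ on the vertical line $V_i$ through $P_i$, the $(1,1)$-curve through $P_i,P_j,P$ cannot be irreducible: an irreducible $(1,1)$-curve meets a vertical line in a single point, whereas $S$ would have to meet $V_i$ in the two distinct points $P_i,P$. Hence $S$ degenerates into $V_i$ together with the horizontal line through $P_j$. Reading off $Q_P\cap S$ — the two points of $Q_P\cap V_i$ are $P_i,P$, and the two points of $Q_P$ on the horizontal line through $P_j$ are $P_j$ and $\widetilde{P}$ — shows that $\widetilde{P}$ lies on the horizontal line through $P_j$, giving $(I_{i,j})_*(H_1-E_i)=H_2-E_j$. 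The analogous argument with horizontal lines gives $(I_{i,j})_*(H_2-E_i)=H_1-E_j$.

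Together with the two invariances above (and with \eqref{Iij Ek} substituted in to eliminate the $E_k$), these four relations constitute a $4\times4$ linear system for $(I_{i,j})_*$ on $H_1,H_2,E_i,E_j$; solving it produces \eqref{Iij H1}--\eqref{Iij Ej}, and one can double-check the outcome against the requirement that $(I_{i,j})_*$ preserve the intersection form and square to the identity. The hard part will be justifying the reducible-curve degeneration, since that is exactly what breaks the apparent $H_1\leftrightarrow H_2$ and $E_i\leftrightarrow E_j$ symmetry of the construction: without it one recovers only the symmetric combinations and cannot separate $H_1$ from $H_2$. I would also verify that, for generic $P$, the four intersection points of $Q_P$ with $S$ are distinct, so that the ``fourth point'' in Definition~\ref{def biquadratic inv 12} is unambiguous and the contraction in \eqref{Iij Ek} is the only degeneration that occurs.
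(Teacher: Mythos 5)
Your proposal is correct and follows essentially the same route as the paper's proof: the invariance of the pencil class and of the $(1,1)$-curves through $P_i,P_j$, the degeneration of the $(1,1)$-curve into a vertical plus a horizontal line when $P$ lies on a line through a base point (yielding $(I_{i,j})_*(H_1-E_i)=H_2-E_j$ and its companions), the contraction of the $(1,1)$-curve through $P_i,P_j,P_k$ onto $P_k$, and finally solving the resulting linear system. Your explicit justification of the reducible-curve degeneration is a welcome elaboration of a step the paper leaves implicit, but the argument is the same.
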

\begin{proof}
We observe that a generic biquadratic curve of the pencil is mapped to itself, therefore
$$
(I_{i,j})_*(2H_1+2H_2-E_1-\ldots-E_8)=2H_1+2H_2-E_1-\ldots-E_8.
$$ 
Any (1,1)-curve through $P_i,P_j$ is mapped to itself, therefore
$$
(I_{i,j})_*(H_1+H_2-E_i-E_j)=H_1+H_2-E_i-E_j.
$$ 
The latter equation is actually a consequence of stronger relations. If $P$ belongs to the vertical line through $P_i$ then $\widetilde P$ belongs to the horizontal line through $P_j$, and vice versa, so that 
$$
(I_{i,j})_*(H_1-E_i)=H_2-E_j,  \quad (I_{i,j})_*(H_2-E_j)=H_1-E_i.
$$
Of course, we also can exchange here the roles of $i$ and $j$:
$$
(I_{i,j})_*(H_1-E_j)=H_2-E_i,  \quad (I_{i,j})_*(H_2-E_i)=H_1-E_j.
$$
For any $k\neq i,j$, the (1,1)-curve through $P_i,P_j,P_k$ is blown down to $P_k$. Since we are dealing with an involution, each $P_k$ is blown up to the (1,1)-curve through $P_i,P_j,P_k$. This gives us \eqref{Iij Ek}.
Solving this system of linear equations results in the statement of proposition.
\end{proof}

Now Theorem \ref{th biquadratic I12} follows by straightforward computations from \eqref{i1 H1 H2}--\eqref{i2 Ej}, Proposition \ref{prop I12}, and the Kac formula \eqref{Kac formula}. We omit the details.

\subsection{Involutions along (2,1)- and (1,2)-curves}

For biquadratic pencils, one can define a still another family of birational involutions.

\begin{definition}\label{def biquadratic inv 1234}
Consider a pencil of biquadratic curves in $\bbP^1\times\bbP^1$ with base points $P_i$, $i=1,\ldots,8$. For any four distinct base points $P_i,P_j,P_k,P_\ell$, we define the involutions $I^{(1)}_{i,j,k,\ell}$ and $I^{(2)}_{i,j,k,\ell}$ as follows. Through a generic point $P$ passes a unique biquadratic $Q_P$ of the pencil. Moreover, through $P_i,P_j,P_k,P_\ell,P$ there passes a unique (2,1)-curve $S^{(1)}$ and a unique (1,2)-curve $S^{(2)}$. We set  
$$
I^{(1)}_{i,j,k,\ell}(P) = \widetilde{P}^{(1)}, \quad I^{(2)}_{i,j,k,\ell}(P) = \widetilde{P}^{(2)},
$$
where $\widetilde{P}^{(1)}$ is the sixth intersection point of $Q_P$ with $S^{(1)}$, and $\widetilde{P}^{(2)}$ is the sixth intersection point of $Q_P$ with $S^{(2)}$.
\end{definition}

These involutions can be likewise used to give a geometric interpretation for translations in the group $W(E_8^{(1)})$.

\begin{theorem}\label{th biquadratic I1234}
The birational map $I^{(1)}_{i,j,k,\ell}\circ i_1=I^{(2)}_{i,j,k,\ell}\circ i_2$ on $\bbP^1\times\bbP^1$  induces the action of $$T_{H_1+H_2-E_i-E_j-E_k-E_\ell}$$ on ${\rm Pic}(Y)$.
\end{theorem}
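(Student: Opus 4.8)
The plan is to follow the blueprint already established in the proofs of Proposition \ref{prop Manin}, Theorem \ref{th cubic I1234 circ I4}, and Theorem \ref{th biquadratic I12}: reduce the theorem to a purely linear-algebraic verification on ${\rm Pic}(Y)$, the heart of which is computing the action induced by the new involution $I^{(1)}_{i,j,k,\ell}$ (and symmetrically $I^{(2)}_{i,j,k,\ell}$). For notational convenience I would fix $(i,j,k,\ell)=(1,2,3,4)$. The final composition step is then mechanical: I combine the known action of $(i_1)_*$ from \eqref{i1 H1 H2}--\eqref{i1 Ej} with the action of $(I^{(1)}_{1,2,3,4})_*$ and compare against the Kac formula \eqref{Kac formula} applied to $\alpha = H_1+H_2-E_1-E_2-E_3-E_4$, just as the details were omitted after Theorem \ref{th biquadratic I12}.

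The substantive work, which I would isolate as a proposition analogous to Proposition \ref{prop I12}, is determining $(I^{(1)}_{1,2,3,4})_*$ on ${\rm Pic}(Y)$. I would extract the linear map from geometric invariance relations rather than from coordinates. First, a generic biquadratic of the pencil is fixed, giving $(I^{(1)}_{1,2,3,4})_*(\delta)=\delta$ with $\delta = 2H_1+2H_2-E_1-\ldots-E_8$. Second, any (2,1)-curve through $P_1,P_2,P_3,P_4$ is mapped to itself; the class of a (2,1)-curve through those four base points is $2H_2+H_1-E_1-E_2-E_3-E_4$ (here I use that a (2,1)-curve meets a vertical line once and a horizontal line twice, so its class has intersection $1$ with $H_2$ and $2$ with $H_1$), and this class is fixed. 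Third, and most informative, are the blow-down relations: for each $m=5,\ldots,8$, the (2,1)-curve through $P_1,P_2,P_3,P_4,P_m$ is contracted to $P_m$, and since $I^{(1)}_{1,2,3,4}$ is an involution, $E_m$ is blown up to that (2,1)-curve, whose class is $H_1+2H_2-E_1-E_2-E_3-E_4-E_m$; this yields $(I^{(1)}_{1,2,3,4})_*(E_m)=H_1+2H_2-E_1-E_2-E_3-E_4-E_m$ for $m=5,\ldots,8$. Finally, I would look for refined analogues of the line-swapping relations used in Proposition \ref{prop I12}: configurations among $P_1,\ldots,P_4$ and $P$ lying on low-degree curves that the involution permutes in a controlled way, giving enough additional independent equations to pin down the images of $H_1,H_2,E_1,\ldots,E_4$. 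Solving the resulting linear system then gives the full map, in complete parallel to how \eqref{I1234 0}--\eqref{I1234 5..9} were obtained.

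The main obstacle I anticipate is the asymmetric role of the two rulings for the (2,1)-case. Unlike the (1,1)-involution of Proposition \ref{prop I12}, where the vertical/horizontal switch relations came in clean conjugate pairs $H_1-E_i \leftrightarrow H_2-E_j$, here the (2,1)-curve treats $H_1$ and $H_2$ differently, so I must be careful to get the correct coefficients and to produce enough invariance/swap relations to determine the action on the four-dimensional subspace spanned by $H_1,H_2$ and the images of $E_1,\ldots,E_4$ without relying on symmetry between the factors. In particular, checking the self-consistency that $I^{(1)}_{1,2,3,4}$ is genuinely an involution (so that $(I^{(1)}_{1,2,3,4})_*^2 = \mathrm{id}$) on the remaining classes will be the delicate bookkeeping, and the identity $I^{(1)}_{1,2,3,4}\circ i_1 = I^{(2)}_{1,2,3,4}\circ i_2$ asserted in the theorem statement provides a valuable cross-check: the composition must be independent of which ruling one uses, which I would verify at the level of induced maps on ${\rm Pic}(Y)$ using $(i_1)_*$, $(i_2)_*$ and the computed $(I^{(1)})_*$, $(I^{(2)})_*$.
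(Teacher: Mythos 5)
Your overall strategy is exactly the intended one (the paper itself only says the proof ``goes along the same lines'' as for Theorem \ref{th biquadratic I12}): determine $(I^{(1)}_{i,j,k,\ell})_*$ from invariance and blow-down relations, then compose with $(i_1)_*$ and compare with the Kac formula \eqref{Kac formula}. However, the one explicit computation you commit to is wrong, and in a way that matters precisely because of the asymmetry you yourself flag. With the paper's conventions ($H_1$ the class of a vertical line $\{x=\mathrm{const}\}$, $H_2$ of a horizontal line, $\langle H_1,H_2\rangle=1$), a $(2,1)$-curve meets a vertical line once, i.e.\ has intersection number $1$ with $H_1$ --- not with $H_2$ as you write --- so its class is $2H_1+H_2$, and the $(2,1)$-curve through $P_1,\dots,P_4,P_m$ has class $2H_1+H_2-E_1-\cdots-E_4-E_m$. (Your parenthetical is internally inconsistent: since $\langle aH_1+bH_2,H_1\rangle=b$, ``meets a vertical line once'' forces the coefficient of $H_2$ to be $1$.) Consequently $(I^{(1)}_{1,2,3,4})_*(E_m)=2H_1+H_2-E_1-\cdots-E_4-E_m$ for $m=5,\dots,8$, not $H_1+2H_2-\cdots$. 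If you carry the swapped classes through, you are in effect computing $(I^{(2)}_{1,2,3,4})_*$, and its composition with $(i_1)_*$ does \emph{not} reproduce $T_{H_1+H_2-E_1-E_2-E_3-E_4}$; the pairing of $I^{(1)}$ with $i_1$ (rather than with $i_2$) in the theorem hinges exactly on getting this right. Your proposed cross-check that $I^{(1)}\circ i_1$ and $I^{(2)}\circ i_2$ induce the same map on ${\rm Pic}(Y)$ would indeed detect the slip, but as written the proposal fails at the final comparison.

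The other point is that the ``refined swap relations'' you defer are the crux and should be exhibited: for $m\in\{1,2,3,4\}$ and $P$ on the vertical line $L_m$ through $P_m$, the $(2,1)$-curve through $P_1,\dots,P_4,P$ would have to meet $L_m$ in the two points $P_m$ and $P$, exceeding its intersection number $1$ with $H_1$; hence it degenerates into $L_m$ together with the unique $(1,1)$-curve through the remaining three points of $\{P_1,\dots,P_4\}$, and the residual intersection with $Q_P$ lies on that $(1,1)$-curve. This gives $(I^{(1)}_{1,2,3,4})_*(H_1-E_m)=H_1+H_2-\sum_{n\in\{1,2,3,4\},\,n\neq m}E_n$ for $m=1,\dots,4$. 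These four relations, together with $(I^{(1)}_{1,2,3,4})_*\delta=\delta$, the invariance of $2H_1+H_2-E_1-\cdots-E_4$, and the four corrected blow-down relations, form ten conditions that determine the map completely; one finds for instance $(I^{(1)}_{1,2,3,4})_*(H_1)=5H_1+4H_2-3(E_1+\cdots+E_4)-(E_5+\cdots+E_8)$, and composing with \eqref{i1 H1 H2}--\eqref{i1 Ej} then matches the Kac formula for $T_{H_1+H_2-E_1-E_2-E_3-E_4}$. With the corrected class and these degeneration relations made explicit, your plan goes through.
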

The proof goes along the same lines as in the previous section.
\smallskip

As a side remark, we mention the following corollary of Theorems \ref{th biquadratic I12}, \ref{th biquadratic I1234}:
we have 
$$
I^{(1)}_{i,j,k,\ell}\circ i_1=I^{(2)}_{i,j,k,\ell}\circ i_2=I_{i,j}\circ i_1\circ I_{k,\ell}\circ i_2=I_{i,j}\circ i_2\circ I_{k,\ell}\circ i_1,
$$
therefore various involutions introduced above are related by
$$
I^{(1)}_{i,j,k,\ell}=I_{i,j}\circ i_2\circ I_{k,\ell}, \quad I^{(2)}_{i,j,k,\ell}=I_{i,j}\circ i_1\circ I_{k,\ell}.
$$

%
%
%Its action on the Picard group can be characterised by the equations:
%\begin{equation}
%	\left\{ \begin{array}{rcl}
%		2H_1+2H_2 - \sum_i E_i &=& 2\Ht_1 + 2\Ht_2 - \sum_i \Et_i,\\
%		2H_1+H_2-E_i-E_j-E_k-E_l &=& 2\Ht_1 + \Ht_2 - \Et_i - \Et_j - \Et_k - \Et_l, \\
%		2H_1 + H_2 - E_i - E_j - E_k - E_l - E_m &=& \Et_m,\qquad \qquad m \neq i,j,k,l,\\
%		H_1 + H_2 - E_i - E_j - E_k &=& \Ht_1 - E_l,\\
%		H_1 + H_2 - E_i - E_j - E_l &=& \Ht_1 - E_k,\\
%		H_1 + H_2 - E_i - E_k - E_l &=& \Ht_1 - E_j,\\
%		H_1 + H_2 - E_j - E_k - E_l &=& \Ht_1 - E_i.\\
%	\end{array} \right.
%\end{equation}
%
%Let $\bar{S}_{Q_1,Q_2,Q_3,Q_4,Q_5}$ denote the $(1,2)$-curve through any five points $Q_i$, $i=1,...,5$. For each 4-tuple of base points $P_i,P_j,P_k,P_l$ we define the involution $\bar{I}_{P_i,P_j,P_k,P_l}$ by the relation
%\begin{equation}
%	\bar{I}_{P_i,P_j,P_k,P_l}(P)=\tilde{P} \quad \Leftrightarrow \quad \mathcal{C}_\mu \cap \bar{S}_{P_i,P_j,P_k,P_l,P} = \{P_i,P_j,P_k,P_l,P,\tilde{P}\}.
%\end{equation}
%Its action on the Picard group can be characterised by the equations:
%\begin{equation}
%	\left\{ \begin{array}{rcl}
%		2H_1+2H_2 - \sum_i E_i &=& 2\Ht_1 + 2\Ht_2 - \sum_i \Et_i,\\
%		H_1+2H_2-E_i-E_j-E_k-E_l &=& \Ht_1 + 2\Ht_2 - \Et_i - \Et_j - \Et_k - \Et_l, \\
%		H_1 + 2H_2 - E_i - E_j - E_k - E_l - E_m &=& \Et_m,\qquad \qquad m \neq i,j,k,l,\\
%		H_1 + H_2 - E_i - E_j - E_k &=& \Ht_2 - E_l,\\
%		H_1 + H_2 - E_i - E_j - E_l &=& \Ht_2 - E_k,\\
%		H_1 + H_2 - E_i - E_k - E_l &=& \Ht_2 - E_j,\\
%		H_1 + H_2 - E_j - E_k - E_l &=& \Ht_2 - E_i.\\
%	\end{array} \right.
%\end{equation}

\section{Conclusions}
The results of the present work point out several directions in which they can be developed. These directions are under our current investigation.
\smallskip

1) In \cite{KMNOY, KMNOY1}, a geometric interpretation was given for discrete elliptic Painlev\'e equations corresponding to the translations $T_{\cE_i-\cE_j}$ in the Weyl group $W(E_8^{(1)})$. This interpretation is a non-autonomous version of the Manin map $I_j\circ I_i$, for the case when the nine points in $\bbP^2$ do not support a pencil of cubic curves. The resulting (non-autonomous) map corresponds to the base points moving in a certain way along one fixed cubic curve. It will be important to find a similar non-autonomous version of the geometric interpretation for discrete elliptic Painlev\'e equations corresponding to the translations $T_{\cE_0-\cE_i-\cE_j-\cE_k}$. 
\smallskip

2) Also a three-dimensional extension of the present results looks appealing. Recall that a birational realization of the Weyl group $W(E_7^{(1)})$ on $\bbP^3$, based on a configuration of eight points, was given in \cite{T}. Also for these maps, a geometric construction is not presently known. It can be anticipated that geometric involutions mentioned at the end of \cite{ASW1} and similar ones will allow us to find a geometric realization for all translations of this group, both in the autonomous and non-autonomous set-up.

%%%%%%%%%%%%%%%%%%%%%%%%%%%%%%%%%%%%%%%%%%%%%%

\end{document}